\renewcommand{\P}{\mathbf{P}}
\newcommand{\E}{\mathbf{E}}
\newcommand{\I}{\mathbf{1}}
\newcommand{\FF}{\mathbb{F}}
\newcommand{\GG}{\mathbb{G}}
\newcommand{\F}{\mathcal{F}}
\newcommand{\G}{\mathcal{G}}
\newcommand{\dd}{{\mathrm{d}}}            
\newtheorem{thm}{Theorem}[section]
\newtheorem{cor}[thm]{Corollary}
\newtheorem{lem}[thm]{Lemma}
\newtheorem{prop}[thm]{Proposition}
\theoremstyle{definition}
\newtheorem{defn}[thm]{Definition}
\theoremstyle{remark}
\newtheorem*{rem}{Remark}
\theoremstyle{remark}
\numberwithin{equation}{section}
\numberwithin{equation}{section}
\begin{document}

\title{Hazard processes and martingale hazard processes}

\author{Delia Coculescu}
\address{ETHZ \\ Departement Mathematik, R\"{a}mistrasse 101\\ Z\"{u}rich 8092, Switzerland.}
\email{delia.coculescu@math.ethz.ch}

\author{Ashkan Nikeghbali}
\address{Institut f\"ur Mathematik,
Universit\"at Z\"urich, Winterthurerstrasse 190,
CH-8057 Z\"urich,
Switzerland}
\email{ashkan.nikeghbali@math.uzh.ch}

\subjclass[2000]{60G07, 60G44, 60G99} \keywords{Default modeling, credit risk models, random times, enlargements of filtrations, hazard process, immersed filtrations, pseudo-stopping times, honest times.}

\begin{abstract}
In this paper, we provide a solution to two problems which have
been open in default time modeling in credit risk. We first show
that if $\tau$ is an arbitrary random (default) time such that its
Az\'ema's supermartingale $Z_t^\tau=\P(\tau>t|\F_t)$ is
continuous, then $\tau$ avoids stopping times. We then disprove a
conjecture about the equality between the hazard process and the
martingale hazard process,  which first appeared in
\cite{jenbrutk1}, and we show how it should be modified to become
a theorem. The pseudo-stopping times, introduced in
\cite{AshkanYor}, appear as the most general class of random times
for which these two processes are equal. We also show that these
two processes always differ when $\tau$ is an honest time.

\end{abstract}

\maketitle

\section{Introduction}

Random times which are not stopping times have recently played an
increasing role in the modeling of default times in the
hazard-rate approach of the credit risk. Following
\cite{jenbrutk1}, \cite{elliotjeanbyor}, \cite{bsjeanblanc}, a
hazard rate  model may be constructed in two steps. We begin with
a filtered probability space $(\Omega,\F,\FF=(\F_t),\P)$
satisfying the usual assumptions. The default time $\tau$ is
defined as a random time (i.e., a nonnegative $\F$-measurable
random variable) which is not an $\FF$- stopping time). Then, a
second filtration $\GG=(\G_t)$ plays an important role for
pricing. This is obtained by progressively enlarging the
filtration $\FF$ with the random time $\tau$: $\GG$ is the
smallest filtration satisfying the usual assumptions, containing
the original filtration $\FF$, and for which $\tau$ is a stopping
time, such as explained in \cite{jeulin}, \cite{jeulinyor}. The
filtration $\GG$ is usually considered as the relevant filtration
to consider in credit risk models: it represents the information
available on the market. The enlargement of filtration provides a
simple formula to compute the $\GG$-predictable compensator of
the process $\I_{\tau\leq t}$, which is a fundamental process
in the modeling of default times. Note that an alternative and
more direct hazard-rate approach, which historically appeared
first, consists in introducing one single global filtration $\GG$
from the start, where  the default time is a totally inaccessible stopping time with a given intensity.   Major papers using
the intensity based framework are \cite{JarrTur95},
\cite{JarrLaTur97}, \cite {Lando98}, \cite{MadUnal98},
\cite{DuffSingl}.

Both hazard-rate
approaches mentioned above, i.e., the direct approach or the one
based on two different sets of filtrations, model the occurring of
the default as a surprise for the market, that is, the default
time is a totally inaccessible stopping time in the global market
filtration $\GG$. The technique of enlargements of 
filtrations appears to be a useful tool, since it allows to
compute easily the price of a derivative, using the hazard
process. It allows as well an explicit
construction of the default compensator (in section 3, we shall
give simple "universal" formulae for the compensator of
pseudo-stopping times and honest times ). For instance, one can
take into account the link between the default-free and the
defaultable assets, or the incomplete information about the firm
fundamentals, and thus construct the compensator in an endogenous
manner (\cite{DuffLand}, \cite{Kusuok}, \cite{elliotjeanbyor},
\cite{GuoJarrZeng}, \cite{cocgemjea08}, \cite{freyschmidt}).

We now briefly justify the use of non stopping times for default
times (see  \cite{cocjeanik08} for a more detailed analysis, where
no-arbitrage conditions are also studied).

Defaultable claims are defined by their maturity date, say T, and
their promised stream of cash flows through time. Typically these
consist of a promised face value, to be paid at maturity and a
stream  to be paid during the lifetime of the contract. We may
suppose that the promised claim is an $\F_T$-measurable random
variable, denoted by $P$, since the intermediary payments may
be invested in the default-free money market account. In addition,
there is a random time $\tau$ at which the default occurs, and
when a recovery payment $R \neq P$ is made, in replacement of the
promised one. The defaultable payoff is of the form:
\begin{equation}\label{defpayoff}
X=P\I_{\tau> T}+R\I_{\tau\leq T}.
\end{equation}

When constructing a model for the pricing of defaultable claims
issued by a particular firm, say XYZ, one can proceed in two
steps. First, one needs to model the value of the promised claim
$P$ , as well as the recovery claim $R$ at intermediary times
$0\leq t\leq T$. For this, one can use the traditional
default-free evaluation technique. For instance, the promised
claim can be a fixed amount of  dollars or commodity. The question
of the recovery, even though more complicated, depends on the
value of the contract's collateral (for instance a physical
asset), which can be assumed to be default-free. In this case,
default-free techniques may be applied. Another possibility is to
estimate recovery rates from historical default data. Without
regard of the technique chosen, we denote by $\FF$ the information
available to the modeler after the first step, i.e, the estimation
of the promised and recovery assets, as well as the other
available market information. We exclude  information about the
assets issued by the firm $XYZ$, even if it is available, since
this should be the output of our evolution procedure, rather than
the input. For instance, we consider that the filtration $\FF$
does not contain information about the price of a defaultable bond
issued by the firm XYZ, even though this bond might be traded.
Usually, this construction leads to the situation where $\tau$ is
not an $\FF$-stopping time. For instance, in the classical Cox
framework, the default time is defined as:
\[
\tau:=\inf\{t|\Lambda_t>\Theta\},
\]
where $\Lambda$ is $\FF$ predictable and increasing, and $\Theta$ is an exponential random variable independent from $\FF$. This situation is also common in default models with incomplete information.

In a second step, we define the global filtration $\GG$ (i.e., the
one to use for pricing claims of the type (\ref{defpayoff})) in
such a way that $\tau$ becomes a stopping time. We are thus in the
progressive enlargements of filtrations setting.

When the random time is not a stopping time, several quantities
play an important role in the analysis of the model. The most
fundamental object attached to an  arbitrary random time $\tau$ is
certainly  the supermartingale $Z_t^\tau=\P(\tau>t|\F_t)$, chosen
to be c\`adl\`ag, called the Az\'ema's supermartingale associated
with $\tau$ (\cite{azema}). In the credit risk literature, very
often the random time $\tau$ is given with extra regularity
assumptions, such as continuity or monotonicity of $Z_t^\tau$.
However, these assumptions were  not translated into properties of
the random time $\tau$. We shall try to clarify the link between
the assumptions about the process $Z_t^\tau$ and the properties of
the default time $\tau$, since it is crucial for the modeler to
select the properties of the random time which appear to be the
most sensible.

Two more processes, closely related to the Az\'ema supermartingale
$Z^\tau$ and the $\GG$ predictable compensator of $\I_{\tau\leq
t}$, are often used in the evaluation of defaultable claims: the
hazard process and the martingale hazard process, which we now
define.
\begin{defn}\label{def::hazards}
\begin{enumerate}
\item Let $\tau$ be  a random time such that $Z_t^\tau>0$, for all
$t\geq0$ (in particular $\tau$ is not an  $\FF$-stopping
time). The nonnegative stochastic process
$\left(\Gamma_t\right)_{t\geq0}$ defined by:
$$\Gamma_t=-\ln Z_t^\tau,$$is called the \emph{hazard process}.
\item Let $D_t=\I_{\tau\leq t}$. An $\FF$-predictable
right-continuous increasing process $\Lambda$ is called an
$\FF$-\emph{martingale hazard process} of the random time $\tau$
if   the process $\widetilde{M}_t=D_t-\Lambda_{t\wedge\tau}$
is a $\GG$ martingale.
\end{enumerate}
\end{defn}

We see that the martingale hazard process is only defined  up to
time $\tau$ and that the stopped martingale hazard process
is the $\GG$-predictable compensator of the process $D$. This has
two implications. First, several martingale hazard processes might
exist for a default time, even if the predictable compensator is
unique. Secondly, this representation allows the martingale hazard
process to be $\FF$-adapted as stated in the definition even if,
obviously, the compensator is only $\GG$-adapted. In the next
section we will characterize the situation where the martingale
hazard process is unique.

Another important problem is to   know under which conditions
the hazard process and the martingale hazard processes coincide:
this was object of a conjecture made in \cite{jenbrutk1}:

\noindent \textit{Conjecture:} Suppose that the process $Z_t^\tau$
is decreasing. If $\Lambda$ is continuous, then $\Lambda=\Gamma$.

We shall show that the problem was not well posed and we shall see
how it should be phrased in order to have the equality between the
hazard process and the martingale hazard process under some
general conditions. More generally, the aim of this paper is to
show that the general theory of stochastic processes provides a
natural framework to pose and to study the modeling of default
times, and that it helps solve in a simple way some of the
problems raised there.

The paper is organized as follows:

\noindent In section 2, we recall some basic facts from the
general theory of stochastic processes that will be relevant for
this paper.

\noindent  In section 3, we show that if $Z_t^\tau$ is continuous,
then $\tau$ avoids stopping times. We also see under which
conditions the martingale hazard process and the hazard process
coincide: the pseudo-stopping times, introduced in
\cite{AshkanYor}, appear there as the most general class of random
times for which these two processes are equal. Moreover, we prove
that for honest times, which form another remarkable class of
random times, the hazard process and the martingale hazard process
always differ.
\newline

\textbf{Acknowledgments}. We wish to thank Monique Jeanblanc for very helpful conversations and comments that  improved the first drafts of this paper.

\section{Basic facts}

Throughout this paper, we assume we are given a filtered
probability space $\left(\Omega,\F,\FF,\P\right)$  satisfying the
usual assumptions.

\begin{defn}
A random time $\tau$ is a nonnegative random variable
$\tau:\left(\Omega,\mathcal{F}\right)\rightarrow[0,\infty]$.
\end{defn}

When dealing with arbitrary random times, one often works under the
following conditions:
\begin{itemize}
\item Assumption $\mathbf{(C)}$: all $\left( \mathcal{F}_{t}\right) $-martingales are \underline{c}ontinuous (e.g:
the Brownian filtration).

\item Assumption $\mathbf{(A)}$: the random time
$\tau $ \underline{a}voids every $\left( \mathcal{F}_{t}\right) $%
-stopping time $T$, i.e. $\mathbf{P}\left[ \rho =T\right] =0$.
\end{itemize}When we refer to assumptions $\mathbf{(CA)}$, this will
mean that both the conditions $\mathbf{(C)}$ and $\mathbf{(A)}$
hold.

We also recall
the definition of the Az\'ema's supermartingale as well as some
important processes related to it:
\begin{itemize}
\item the $\left( \mathcal{F}_{t}\right) $ supermartingale
\begin{equation}
Z_{t}^{\tau }=\mathbf{P}\left[ \tau >t\mid \mathcal{F}_{t}\right]
\label{surmart}
\end{equation}%
chosen to be c\`{a}dl\`{a}g, associated to $\tau $\ by Az\'{e}ma
(\cite{azema});

\item the $\left( \mathcal{F}_{t}\right) $ dual optional and predictable
projections of the process $1_{\left\{ \tau \leq t\right\} }$,
denoted respectively by $A_{t}^{\tau }$ and $a_{t}^{\tau }$;

\item the c\`{a}dl\`{a}g martingale
\begin{equation*}
\mu _{t}^{\tau }=\mathbf{E}\left[ A_{\infty }^{\tau }\mid \mathcal{F}_{t}%
\right] =A_{t}^{\tau }+Z_{t}^{\tau }.
\end{equation*}
\end{itemize}

We also consider the Doob-Meyer decomposition of (\ref{surmart}):%
\begin{equation*}
Z_{t}^{\tau }=m_{t}^{\tau }-a_{t}^{\tau }.
\end{equation*}%
We note that the supermartingale $\left(Z_{t}^{\tau}\right)$ is
the optional projection of $\mathbf{1}_{[0,\tau[}$.

Let us also define very rigourously the progressively enlarged filtration $\GG$.

We enlarge the initial filtration $\left(
\mathcal{F}_{t}\right) $\ with the process $\left( \tau \wedge
t\right) _{t\geq 0}$, so that the new enlarged filtration $\left(
\mathcal{G}_{t}\right) _{t\geq 0}$\  is the
smallest filtration (satisfying the usual assumptions) containing $\left( \mathcal{F}_{t}\right) $\ and making $%
\tau $\ a stopping time, that is
$$\mathcal{G}_{t}=\mathcal{K}_{t+},$$ where
$$\mathcal{K}_{t}=\mathcal{F}_{t}\bigvee\sigma\left(\tau\wedge
t\right).$$ A very common situation encountered in default times
modeling is the   $(H)$ hypothesis framework: every
$\FF$-local martingale is also a $\GG$-local martingale. For
instance, this property is always satisfied when the default time
is a Cox time.

However, it is possible to introduce more general random times. We
recall the definition of pseudo-stopping times which extend the
$(H)$ hypothesis framework and which will play an important role
in the study of hazard processes and martingale hazard processes.

\begin{defn}[\cite{AshkanYor}]\label{def:pst}
We say that $\tau $ is a $\left( \mathcal{F}_{t}\right) $
pseudo-stopping time if for every$\ \left( \mathcal{F}_{t}\right)
$-martingale $\left( M_{t}\right) $ in $\mathcal{H}^{1}$, we have%
\begin{equation}
\mathbf{E}M_{\tau }=\mathbf{E}M_{0}.  \label{pta}
\end{equation}
\end{defn}
\begin{rem}
It is equivalent to assume that (\ref{pta}) holds for bounded
martingales, since these are dense in $\mathcal{H}^{1}$. It can also
be proved that then (\ref{pta}) also holds for all uniformly
integrable martingales (see \cite{AshkanYor}).
\end{rem}
The following characterization of pseudo-stopping times will be often used in the sequel:
\begin{thm}[\cite{AshkanYor}]\label{mainthm2}
The following four properties are equivalent:
\begin{enumerate}
\item $\tau $\ is a $\left( \mathcal{F}_{t}\right) $\ pseudo-stopping time,
i.e (\ref{pta}) is satisfied;

\item $\mu _{t}^{\tau }\equiv 1$, $a.s$

\item $A_{\infty }^{\tau }\equiv 1$, $a.s$

\item every $\left( \mathcal{F}_{t}\right) $\ local martingale $\left(
M_{t}\right) $ satisfies
\begin{equation*}
\left( M_{t\wedge \tau }\right) _{t\geq 0}\text{ }is\text{ }a\text{ }local%
\text{ }\left( \mathcal{G}_{t}\right) \text{ }martingale.
\end{equation*}

If, furthermore, all $\left( \mathcal{F}_{t}\right) $ martingales
are continuous, then each of the preceding properties is equivalent
to

\item
\begin{equation*}
\left( Z_{t}^{\tau }\right) _{t\geq 0}\text{ is a decreasing }\left(
\mathcal{F}_{t}\right) \ \text{predictable process}
\end{equation*}
\end{enumerate}
\end{thm}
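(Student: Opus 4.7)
The plan is to establish $(2)\Leftrightarrow(3)\Leftrightarrow(1)$ directly from properties of the dual optional projection, then $(1)\Leftrightarrow(4)$ via optional stopping together with Jeulin's decomposition formula, and finally $(2)\Leftrightarrow(5)$ under $\mathbf{(C)}$ via uniqueness of the Doob--Meyer decomposition.

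The central identity I would record first is
$$\E[M_\tau\I_{\tau<\infty}]=\E[M_\infty A_\infty^\tau]\qquad\text{for every bounded }\FF\text{-martingale }M,$$
which follows from the defining property $\E\int_0^\infty H_s\,dA_s^\tau=\E[H_\tau\I_{\tau<\infty}]$ of the dual optional projection applied to $H=M$, combined with integration by parts (the predictable integral $\int A_{s-}^\tau\,dM_s$ being a martingale with zero mean) and the tower rule $M_s=\E[M_\infty\mid\F_s]$. Since $\mu_t^\tau=\E[A_\infty^\tau\mid\F_t]$ and $A_\infty^\tau\in[0,1]$ a.s., the martingale $\mu^\tau$ equals $1$ identically iff $A_\infty^\tau=1$ a.s., giving $(2)\Leftrightarrow(3)$. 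Testing the displayed identity against arbitrary bounded $M_\infty$ (and using $\E M_0=\E M_\infty$) then yields $(1)\Leftrightarrow(3)$.

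For $(4)\Rightarrow(1)$, I would apply Doob's optional stopping to the bounded $\GG$-martingale $M_{\cdot\wedge\tau}$ at the $\GG$-stopping time $\tau$. For $(2)\Rightarrow(4)$, I would invoke Jeulin's decomposition in the progressive enlargement: for every $\FF$-local martingale $M$, the process
$$M_{t\wedge\tau}-\int_0^{t\wedge\tau}\frac{d\langle M,\mu^\tau\rangle_s}{Z_{s-}^\tau}$$
is a $\GG$-local martingale, so when $\mu^\tau\equiv1$ the bracket vanishes and $M_{\cdot\wedge\tau}$ is itself a $\GG$-local martingale. For the last equivalence under $\mathbf{(C)}$, I would write $Z^\tau=m^\tau-a^\tau$ for its Doob--Meyer decomposition and observe that $A^\tau-a^\tau=\mu^\tau-m^\tau$ is a continuous local martingale of finite variation starting at $0$, hence identically zero; thus $\mu^\tau=m^\tau$, and $\mu^\tau\equiv1$ iff $m^\tau\equiv1$, i.e.\ iff $Z^\tau=1-a^\tau$ is predictable and decreasing.

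The main obstacle is $(2)\Rightarrow(4)$, which relies on Jeulin's decomposition formula in $\GG$; a self-contained treatment would require the semimartingale decomposition of $\FF$-martingales in $\GG$ up to $\tau$, together with a careful localization to reduce from $\mathcal{H}^1$ (where $\mu^\tau$ is well integrable) to general $\FF$-local martingales, and with attention to the strict positivity of $Z^\tau_{-}$ where the bracket term is formed.
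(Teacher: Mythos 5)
The paper itself does not prove this theorem—it is recalled from \cite{AshkanYor}—and your outline reproduces essentially the argument given there: the identity $\E[M_\tau\I_{\tau<\infty}]=\E[M_\infty A_\infty^\tau]$ (via the defining property of the dual optional projection) for the cycle $(1)\Leftrightarrow(2)\Leftrightarrow(3)$, Jeulin's decomposition formula for $(2)\Rightarrow(4)$, optional stopping for $(4)\Rightarrow(1)$, and the identification $\mu^\tau=m^\tau$ under $\mathbf{(C)}$ for the equivalence with $(5)$. The only inaccuracy is the parenthetical claim that $A_\infty^\tau\in[0,1]$ a.s., which is false for general random times (one only has $\E[A_\infty^\tau]=\P(\tau<\infty)\leq 1$); it is fortunately not needed, since $\mu^\tau\equiv1$ forces $A_\infty^\tau=\lim_{t\to\infty}\E[A_\infty^\tau\mid\F_t]=1$ a.s.\ in any case.
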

\begin{rem}

Of course, every stopping time is a pseudo-stopping time by the
the optional sampling theorem. But there are many examples or
families of pseudo-stopping which are not stopping times (see
\cite{AshkanYor}). Similarly, all random times which ensure that
the $(H)$ hypothesis holds are pseudo-stopping times. But there
are pseudo-stopping times for which the $(H)$ hypothesis does not
hold (in particular those which are $\F_\infty$-measurable; see
\cite{AshkanYor} for construction and further characterizations of
pseudo-stopping times).
\end{rem}

The following classical  lemma will be very helpful: it indicates
the properties of the above processes under the assumptions
$\mathbf{(A)}$ or $\mathbf{(C)}$ (for more details or references,
see \cite{delmaismey} or \cite{Ashkansurvey}).
\begin{lem}\label{lem:evitement}
Under condition $\mathbf{(A)}$,  $A_{t}^{\tau }=a_{t}^{\tau }$ is continuous.

Under condition $\mathbf{(C)}$, $A^{\tau}$ is predictable (recall
that  under $\mathbf{(C)}$ the predictable and optional sigma
fields are equal) and consequently $A^{\tau}=a^{\tau}$.

Under conditions $\mathbf{(CA)}$, $Z^{\tau}$ is continuous.
\end{lem}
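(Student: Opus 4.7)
The plan is to handle the three assertions in sequence, the third following at once from the first two. Throughout recall that $A^\tau$ is the dual optional projection of the raw increasing process $D_t=\I_{\tau\leq t}$ and $a^\tau$ its dual predictable projection; both are c\`adl\`ag of integrable variation, and by construction $a^\tau$ is the predictable compensator of $A^\tau$.

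For part (1), the key input is the general identity $\Delta A^\tau_T = \P(\tau=T\mid\F_T)$ at every $\F$-stopping time $T$. This follows by testing the defining relation $\E[\int H_s\,dA^\tau_s]=\E[H_\tau\I_{\tau<\infty}]$ of the dual optional projection against processes of the form $H=U\,\I_{\{t=T\}}$ with $U$ bounded and $\F_T$-measurable. Under $\mathbf{(A)}$ the right-hand side vanishes for every stopping time $T$, and since the jumps of any c\`adl\`ag adapted process are exhausted by a countable family of graphs of stopping times, $A^\tau$ must be continuous. A continuous adapted process of integrable variation is predictable, so uniqueness of the predictable compensator of $D$ yields $a^\tau=A^\tau$.

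For part (2), the parenthetical reminder tells us that under $\mathbf{(C)}$ the optional and predictable $\sigma$-fields coincide. Since $A^\tau$ is optional by construction, it is then predictable, and the same uniqueness argument gives $a^\tau=A^\tau$. For part (3), combine the Doob--Meyer decomposition $Z^\tau=m^\tau-a^\tau$ with what has just been established: $\mathbf{(C)}$ makes the martingale $m^\tau$ continuous, while $\mathbf{(A)}$ together with part (1) makes $a^\tau=A^\tau$ continuous, so $Z^\tau$ is the difference of two continuous processes and hence continuous itself.

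The only delicate ingredient is the jump identification $\Delta A^\tau_T=\P(\tau=T\mid\F_T)$, together with the standard section/exhaustion result that the jumps of a c\`adl\`ag adapted process live on the graphs of a sequence of stopping times; everything else is either bookkeeping or a direct appeal to the hints embedded in the statement of the lemma.
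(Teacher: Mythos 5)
Your proof is correct. The paper itself gives no argument for this lemma --- it is stated as classical with a pointer to Dellacherie--Maisonneuve--Meyer and to Nikeghbali's survey --- and what you have written is precisely the standard proof from those references: the jump identity $\Delta A^{\tau}_{T}=\P(\tau=T\mid\F_{T})$ for the dual optional projection, exhaustion of jumps by graphs of stopping times, uniqueness of the predictable compensator, and the Doob--Meyer decomposition $Z^{\tau}=m^{\tau}-a^{\tau}$ for the final assertion.
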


We give a first application of theorem \ref{mainthm2} and lemma
\ref{lem:evitement} to illustrate how the general theory of
stochastic processes shed a new light on default time modeling. It
is very often assumed in the literature on default times that
$\tau$ is a random time whose associated Az\'ema supermartingale
is continuous and decreasing.
\begin{prop}\label{clar}

Let $\tau$ be a random time that avoids stopping times. Then
$(Z_t^\tau)$ is continuous and decreasing if and only if $\tau$ is
a pseudo-stopping time.
\end{prop}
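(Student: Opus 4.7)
The plan is to pivot everything on the characterization in Theorem \ref{mainthm2} that $\tau$ is a pseudo-stopping time if and only if $\mu^{\tau}\equiv 1$, combined with Lemma \ref{lem:evitement} which under $(A)$ identifies the dual optional and predictable projections $A^{\tau}=a^{\tau}$ (and makes them continuous). Since $\mu^{\tau}=A^{\tau}+Z^{\tau}$ is the link between these three processes, a continuous decreasing $Z^{\tau}$ should be directly translatable into $\mu^{\tau}\equiv 1$ via the Doob-Meyer decomposition $Z^{\tau}=m^{\tau}-a^{\tau}$.

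For the ``if'' direction I would start from $\tau$ pseudo-stopping: Theorem \ref{mainthm2} gives $\mu^{\tau}\equiv 1$, so $Z^{\tau}=1-A^{\tau}$. Under $(A)$, Lemma \ref{lem:evitement} makes $A^{\tau}=a^{\tau}$ continuous and increasing; hence $Z^{\tau}$ is continuous and decreasing.

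For the ``only if'' direction I would observe that a continuous decreasing supermartingale is already a predictable finite-variation process, so the uniqueness of the Doob-Meyer decomposition of $Z^{\tau}$ forces $m^{\tau}\equiv Z_{0}^{\tau}$ and $a_{t}^{\tau}=Z_{0}^{\tau}-Z_{t}^{\tau}$. Avoidance of the stopping time $0$ gives $\P(\tau=0)=0$, hence $Z_{0}^{\tau}=1$ a.s.; invoking $A^{\tau}=a^{\tau}$ one then gets
\begin{equation*}
\mu_{t}^{\tau}=A_{t}^{\tau}+Z_{t}^{\tau}=(1-Z_{t}^{\tau})+Z_{t}^{\tau}\equiv 1,
\end{equation*}
and Theorem \ref{mainthm2} concludes that $\tau$ is a pseudo-stopping time.

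The only real input is the equality $A^{\tau}=a^{\tau}$, which is precisely where the avoidance hypothesis enters (in general these two projections differ by the jumps of $\mathbf{1}_{\{\tau\le\cdot\}}$ at predictable times). Apart from this step the argument is pure bookkeeping on the Doob-Meyer decomposition, so I do not anticipate any genuine obstacle; the proposition is essentially a reading of Theorem \ref{mainthm2} through Lemma \ref{lem:evitement}.
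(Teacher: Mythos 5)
Your proposal is correct and follows essentially the same route as the paper's proof: the forward direction via $\mu^{\tau}\equiv 1$ and Lemma \ref{lem:evitement}, and the converse via uniqueness of the Doob--Meyer decomposition plus the identification $A^{\tau}=a^{\tau}$ under $(\mathbf{A})$. Your explicit remark that avoidance of the stopping time $0$ gives $Z_{0}^{\tau}=1$ is a small detail the paper leaves implicit, but the argument is otherwise identical.
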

\begin{proof}

If $\tau$ is a pseudo-stopping, then from theorem \ref{mainthm2},
$Z_t^\tau=1-A_t^\tau$. If $\tau$ avoids stopping times, then it
follows from lemma \ref{lem:evitement} that $A^\tau$ is continuous
and consequently $Z^\tau$ is continuous.

Conversely, if $Z^\tau$ is continuous, and if $\tau$ avoids
stopping times, then from the uniqueness of the Doob-Meyer
decomposition, $Z_t^\tau=1-a_t^\tau$. But since $\tau$ avoids
stopping times, we have $a_t^\tau=A_t^\tau$ from lemma
\ref{lem:evitement} and hence $Z_t^\tau=1-A_t^\tau$. Consequently,
from theorem \ref{mainthm2}, $\tau$ is a pseudo-stopping time.

\end{proof}
\begin{rem}

We shall see a slight reinforcement of this theorem in the next
section: indeed, we shall prove that if $Z^\tau$ is continuous,
then $\tau$ avoids stopping times.
\end{rem}
\section{Main theorems}

First, we clarify a situation concerning the hazard process.
Indeed, in the credit risk literature, the $\GG$ martingale
$L_t\equiv\I_{\tau>t}e^{\Gamma_t}$ plays an important role (see
\cite{jenbrutk1} or \cite{bsjeanblanc}). But from definition
\ref{def::hazards}, the hazard process is defined only when
$Z_t^\tau>0$ for all $t\geq0$. We wish to show that nevertheless,
the martingale $(L_t)$ is always well defined. For this, it is
enough to show that on the set $\{\tau>t\}$,   $\Gamma_t=-\log
Z_t^\tau $ is always well defined. This is the case thanks to the
following result from the general theory of stochastic processes:
\begin{prop}[\cite{jeulin}, \cite{delmaismey}, p.134]\label{lestbiendefinie}
Let $\tau$ be an arbitrary random time. The sets
$\left\{Z^{\tau}=0\right\}$ and $\left\{Z_{-}^{\tau}=0\right\}$ are
both disjoint from the stochastic interval $[0,\tau[$, and have the
same lower bound $T$, which is the smallest stopping time larger
than $\tau$.
\end{prop}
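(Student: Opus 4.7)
My plan is to introduce the two candidate times $R := \inf\{t : Z^\tau_t = 0\}$ and $R' := \inf\{t : Z^\tau_{t-} = 0\}$. Right-continuity of $Z^\tau$ makes $R$ an $\FF$-stopping time, while $\{Z^\tau_- = 0\}$ is a left-closed predictable set (left-closed because $Z^\tau$ is c\`adl\`ag, so the unique left-limit of a null subsequence is null), and hence $R'$ is an $\FF$-predictable stopping time. I would show successively that $R = T$, then $R' = R$, and finally deduce that both zero sets are contained in $[T, \infty[$ and therefore avoid $[0, \tau[$.

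For $R = T$, I proceed via two inequalities. Disjointness of $\{Z^\tau = 0\}$ from $[0, \tau[$ follows from the optional section theorem: for any $\FF$-stopping time $S$,
$$\P(S < \tau,\, Z^\tau_S = 0) \,=\, \E\bigl[\I_{\{Z^\tau_S = 0\}} \, \P(\tau > S \mid \F_S)\bigr] \,=\, \E[\I_{\{Z^\tau_S = 0\}} Z^\tau_S] \,=\, 0.$$
Since $Z^\tau_R = 0$ on $\{R < \infty\}$ by right-continuity, this disjointness forces $R \geq \tau$ a.s., hence $R \geq T$ as $R$ is a stopping time. Conversely, any stopping time $S$ with $S \geq \tau$ a.s.\ satisfies $\E[Z^\tau_S \I_{\{S < \infty\}}] = \P(\tau > S,\, S < \infty) = 0$, so $Z^\tau_S = 0$ on $\{S < \infty\}$; applied with $S = T$ and combined with the supermartingale inequality $\E[Z^\tau_{T+u} \mid \F_T] \leq Z^\tau_T = 0$ together with non-negativity, this yields $Z^\tau_t = 0$ for every $t \geq T$ a.s., whence $R \leq T$.

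For $R' = R$: once $Z^\tau$ vanishes on $[T, \infty[ = [R, \infty[$, its left limits vanish on $(R, \infty)$, so $R' \leq R$. In the other direction, predictability of $R'$ allows invoking the predictable supermartingale inequality $\E[Z^\tau_{R'} \mid \F_{R'-}] \leq Z^\tau_{R'-} = 0$ on $\{R' < \infty\}$; together with $Z^\tau \geq 0$ this forces $Z^\tau_{R'} = 0$ on that event, hence $R \leq R'$. Therefore $R = R' = T$, and since $\{Z^\tau_- = 0\} \subseteq [R', \infty[ = [T, \infty[$ while $T \geq \tau$, disjointness of $\{Z^\tau_- = 0\}$ from $[0, \tau[$ is immediate.

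The only subtle point I anticipate is the use of the predictable supermartingale inequality at $R'$ (which requires $R'$ to be predictable) to rule out a positive jump of $Z^\tau$ out of zero; once that is established, everything else is a direct application of the section theorems and of standard properties of non-negative c\`adl\`ag supermartingales.
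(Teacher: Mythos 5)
The paper itself does not prove this proposition --- it quotes it from Jeulin and from Dellacherie--Maisonneuve--Meyer --- so your argument has to stand on its own. Most of it does: the identification $R=T$ via the section at the stopping time $R$, and the absorption of a nonnegative supermartingale after it vanishes at a stopping time, is the classical route and is fine (modulo the cosmetic point that $\{Z^{\tau}=0\}\cap[0,\tau[$ is not an optional set, so the optional section theorem does not literally apply to it; but you only ever need the section at the single stopping time $R$, together with $\{Z^{\tau}=0\}\subset[R,\infty[$ and $\tau\le R$ a.s., to get the disjointness).

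The genuine gap is the claim that $R'=\inf\{t:Z^{\tau}_{t-}=0\}$ is a predictable stopping time because $\{Z^{\tau}_{-}=0\}$ is a left-closed predictable set. The correct statement from the general theory is that the d\'ebut $D_A$ of a predictable set $A$ is predictable when $[D_A]\subset A$, i.e.\ when the d\'ebut belongs to the set --- which is what closure for the \emph{right} topology guarantees; left-closedness does not suffice. Indeed $]S,\infty[$ (as a stochastic interval) is predictable and left-closed for any stopping time $S$, yet its d\'ebut is $S$ itself, which may be totally inaccessible. This is not a hypothetical worry here: take $\tau$ a totally inaccessible $\FF$-stopping time, so that $Z^{\tau}_t=\I_{\tau>t}$ and $\{Z^{\tau}_{-}=0\}$ is exactly the interval $]\tau,\infty[$, whence $R'=\tau$ is totally inaccessible. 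So the predictable sampling inequality $\E[Z^{\tau}_{R'}\mid\F_{R'-}]\le Z^{\tau}_{R'-}$ cannot be invoked (and, separately, $Z^{\tau}_{R'-}=0$ on $\{R'<\infty\}$ is only automatic when the infimum defining $R'$ is attained; when it is approached strictly from the right one gets $Z^{\tau}_{R'}=0$ directly by right-continuity instead). The step $R\le R'$ --- that a nonnegative supermartingale is absorbed at $0$ as soon as its left limit vanishes --- must therefore be proved the classical way: set $S_n=\inf\{t:Z^{\tau}_t<1/n\}$, note that $S_n\uparrow S_\infty\le R\wedge R'$ and $Z^{\tau}_{S_n}\le 1/n$ on $\{S_n<\infty\}$, and use optional sampling to get $\E[Z^{\tau}_q\I_{S_\infty\le q}]\le 1/n$ for every $q$, whence $Z^{\tau}\equiv 0$ on $[S_\infty,\infty[$ and $R=R'=S_\infty$. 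With that replacement your proof is complete.
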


The next proposition gives general conditions under which $\Gamma$
is continuous, which is generally taken as an assumption in the
literature on default times: indeed, when computing prices or
hedging, one often has to integrate with respect to $\Gamma$ (see
\cite{jenbrutk1}, \cite{elliotjeanbyor} or \cite{bsjeanblanc}).
\begin{prop}
Let $\tau$ be a random time.
\begin{enumerate}[(i)]
\item Then under $(\textbf{CA})$, $(\Gamma_t)$ is continuous and
$\Gamma_0=0$. \item If $\tau$ is a pseudo-stopping time and if
$(\textbf{A})$ holds, then $(\Gamma_t)$ is a continuous increasing
process, with $\Gamma_0=0$.
\end{enumerate}
\end{prop}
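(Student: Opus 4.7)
The plan is to reduce both statements to what we already know about $Z^\tau$: continuity of $\Gamma = -\ln Z^\tau$ will follow from continuity of $Z^\tau$, while the monotonicity in (ii) will follow from the pseudo-stopping time characterization of Theorem \ref{mainthm2}. The initial condition $\Gamma_0 = 0$ in each case is immediate from $Z_0^\tau = \mathbf{P}(\tau > 0 \mid \F_0) = 1$, since $\tau$ avoids all $\FF$-stopping times (in particular $0$) under $(\mathbf{A})$.

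For part (i), I would invoke Lemma \ref{lem:evitement} directly: under $(\mathbf{CA})$, the Azéma supermartingale $Z^\tau$ is continuous. By Proposition \ref{lestbiendefinie}, the sets $\{Z^\tau = 0\}$ and $\{Z^\tau_- = 0\}$ are disjoint from $[0,\tau[$, so $Z^\tau$ is strictly positive on $[0,\tau[$ and $\Gamma_t = -\ln Z_t^\tau$ is well defined and continuous as a $[0,\infty]$-valued process (convergence to the value $+\infty$ as $Z_t^\tau \to 0$ is continuous in the extended sense).

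For part (ii), I would first use Theorem \ref{mainthm2}: $\tau$ being an $\FF$ pseudo-stopping time is equivalent to $\mu^\tau \equiv 1$, and combined with the identity $\mu_t^\tau = A_t^\tau + Z_t^\tau$ this gives $Z_t^\tau = 1 - A_t^\tau$. Under $(\mathbf{A})$, Lemma \ref{lem:evitement} yields that $A^\tau$ is continuous, and since $A^\tau$ is a dual projection of an increasing process it is itself increasing starting from $0$. Thus $Z^\tau$ is continuous and nonincreasing with $Z_0^\tau = 1$, so that $\Gamma_t = -\ln(1 - A_t^\tau)$ is a continuous increasing process with $\Gamma_0 = 0$ (again interpreted in $[0,\infty]$ to cover the event that $A^\tau$ reaches $1$).

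There is no substantial obstacle here; the only point that deserves care is the behavior of $\Gamma$ at points where $Z^\tau$ vanishes. By Proposition \ref{lestbiendefinie} such points all lie in $[T,\infty[$ where $T \geq \tau$, and on this range the convention $\Gamma_t = +\infty$ preserves both continuity (as an $[0,\infty]$-valued process) and, in case (ii), monotonicity. All other assertions are a one-line combination of Lemma \ref{lem:evitement}, Theorem \ref{mainthm2}, and the Doob–Meyer style identity $Z^\tau = \mu^\tau - A^\tau$.
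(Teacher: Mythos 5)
Your argument is correct and follows exactly the route the paper intends: the paper's own proof is the one-line remark that the proposition "is a consequence of Lemma \ref{lem:evitement} and Theorem \ref{mainthm2}," and your write-up simply fills in those details (continuity of $Z^\tau$ under $(\mathbf{CA})$ for (i); $Z^\tau=1-A^\tau$ with $A^\tau$ continuous increasing under $(\mathbf{A})$ for (ii), plus $Z_0^\tau=1$ from avoidance of the stopping time $0$). The extra care about points where $Z^\tau$ vanishes is harmless but not needed, since the hazard process is only defined under the standing assumption $Z_t^\tau>0$ for all $t$.
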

\begin{proof}
This is a consequence of Lemma \ref{lem:evitement} and theorem \ref{mainthm2}.
\end{proof}

Now, what can one say about the random time $\tau$ if one assumes
that its associated Az\'ema's supermartingale is continuous? It
seems to have been an open question in the literature on credit
risk modeling for a few years now. The next proposition answers
this question:
\begin{prop}\label{prop::cont}

Let $\tau$ be a finite random time such that its  associated
Az\'ema's supermartingale $Z^\tau_t$ is continuous. Then $\tau$
avoids stopping times.
\end{prop}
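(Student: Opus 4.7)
The strategy is to show that, under the continuity assumption on $Z^\tau$, the dual optional projection $A^\tau$ of $\I_{\{\tau\le t\}}$ is continuous; this is equivalent to $\tau$ avoiding stopping times, via the identity $\P(\tau=T)=\E[\Delta A^\tau_T]$ (valid for every stopping time $T$, by applying the defining property of the dual optional projection to the optional indicator of the graph of $T$).

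The starting point is the decomposition $\mu^\tau=A^\tau+Z^\tau$ recalled in Section~2, where $\mu^\tau_t=\E[A^\tau_\infty\mid\F_t]$ is a uniformly integrable cadlag martingale (integrability of $A^\tau_\infty$ follows from $\E[A^\tau_\infty]=\P(\tau<\infty)=1$, since $\tau$ is finite). Rewriting $A^\tau=\mu^\tau-Z^\tau$ and using the continuity of $Z^\tau$ yields the crucial pointwise equality $\Delta A^\tau_T = \Delta \mu^\tau_T$ at every stopping time $T$.

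The key step is then to show that $\E[\Delta\mu^\tau_T]=0$ for every stopping time $T$. On the one hand, $\E[\mu^\tau_T]=\E[\mu^\tau_0]$ by optional sampling for UI martingales. On the other hand, $\mu^\tau_{T-}=\E[\mu^\tau_\infty\mid\F_{T-}]$, a standard identity for UI cadlag martingales which one obtains by approximating $T$ from the left by stopping times $t_n\uparrow T$ with $t_n<T$ and passing to the limit in $\mu^\tau_{t_n}=\E[\mu^\tau_\infty\mid\F_{t_n}]$ via martingale convergence together with $\F_{t_n}\uparrow\F_{T-}$; hence $\E[\mu^\tau_{T-}]=\E[\mu^\tau_0]$ as well. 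Subtraction gives $\E[\Delta A^\tau_T]=\E[\Delta\mu^\tau_T]=0$. Since $A^\tau$ is increasing, $\Delta A^\tau_T\ge 0$, which forces $\Delta A^\tau_T=0$ almost surely.

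Combining with the identity $\P(\tau=T)=\E[\Delta A^\tau_T]$ recalled above immediately yields $\P(\tau=T)=0$ for every stopping time $T$, which is exactly what had to be proved. The only slightly delicate ingredient is the identity $\mu^\tau_{T-}=\E[\mu^\tau_\infty\mid\F_{T-}]$ at a general (not necessarily predictable) stopping time; once this is in hand, everything else is a direct algebraic consequence of the optional decomposition of $Z^\tau$ and the monotonicity of $A^\tau$.
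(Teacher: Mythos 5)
There is a genuine gap at the key step. The reductions you perform first are fine: $\P(\tau=T)=\E[\Delta A^\tau_T]$ is the correct way to translate ``avoidance'' into a statement about the dual optional projection, and continuity of $Z^\tau$ together with $\mu^\tau=A^\tau+Z^\tau$ does give $\Delta A^\tau_T=\Delta\mu^\tau_T$. The problem is the claim that $\E[\Delta\mu^\tau_T]=0$ for \emph{every} stopping time $T$, which you base on the identity $\mu^\tau_{T-}=\E[\mu^\tau_\infty\mid\F_{T-}]$. That identity is false for general stopping times: it is the predictable sampling theorem (equivalently, the fact that the predictable projection of a UI c\`adl\`ag martingale $M$ is $M_-$), and it holds only for \emph{predictable} $T$. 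Your proposed justification --- approximating $T$ strictly from below by stopping times $t_n\uparrow T$, $t_n<T$ --- presupposes that such an announcing sequence exists, which is precisely the definition of $T$ being predictable; for a totally inaccessible $T$ no such sequence exists. A concrete counterexample to the lemma you invoke: let $T$ be a totally inaccessible stopping time with continuous compensator $A$ and set $M=\I_{[T,\infty)}-A$; then $M$ is a UI martingale with $\E[\Delta M_T]=\P(T<\infty)>0$, so $\E[M_{T-}]\neq\E[M_0]$. (This does not contradict the proposition itself, since for $\tau=T$ the supermartingale $Z^\tau=\I_{[0,T)}$ is not continuous, but it kills the general martingale fact your argument rests on.)

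As written, your argument proves only that $\tau$ avoids predictable stopping times; the statement $\E[\Delta\mu^\tau_T]=0$ for arbitrary $T$ is essentially equivalent to the avoidance property you are trying to establish, so one cannot get it from soft martingale theory alone. (Under assumption $\mathbf{(C)}$ every stopping time is predictable and your argument would close, but the proposition assumes nothing of the sort.) The paper circumvents this by working with the two optional projections $Z^\tau={}^o(\I_{[0,\tau)})$ and $\widetilde Z={}^o(\I_{[0,\tau]})$ and the Jeulin--Yor relations $\widetilde Z_+=Z^\tau$, $\widetilde Z_-=Z^\tau_-$: continuity of $Z^\tau$ forces $\widetilde Z=Z^\tau$, and evaluating both optional projections at an arbitrary stopping time $T$ gives $\E[\I_{\tau\geq T}]=\E[\I_{\tau>T}]$ directly, with no predictability restriction. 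To repair your proof you would need either this route or some other input valid at totally inaccessible times.
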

\begin{proof}

It is known that $$Z^\tau_t=\; ^o(\I_{[0,\tau)}),$$that is
$Z^\tau_t$ is the optional projection of the stochastic interval
$[0,\tau)$. Now, following Jeulin-Yor \cite{jeulinyor}, define
$\widetilde{Z}_t$ as the optional projection of the stochastic
interval $[0,\tau]$: $$\widetilde{Z}_t=\;^o(\I_{[0,\tau]}).$$ It
can be shown (see \cite{jeulinyor}) that
$$\widetilde{Z}_+=Z^\tau\;\;\text{ and }\;\; \widetilde{Z}_-=Z^\tau_-.$$
Since $Z^\tau$ is continuous, we have
$$\widetilde{Z}_+=\widetilde{Z}_-=Z^\tau,$$ and consequently, for
any stopping time $T$:
$$\E[\I_{\tau\geq T}]-\E[\I_{\tau> T}]=0,$$
which means that $\P[\tau=T]=0$ for all stopping times $T$.
\end{proof}
As an application, we can  state the following enforcement of proposition \ref{clar}:
\begin{cor}
Let $\tau$ be a random time. Then $(Z_t^\tau)$ is a continuous and decreasing process if and only if $\tau$ is a pseudo-stopping time that avoids stopping times.
\end{cor}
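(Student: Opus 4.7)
The corollary is essentially a clean repackaging of Proposition \ref{clar} together with the new information from Proposition \ref{prop::cont}, so the proof should be very short.

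For the forward implication, I assume $(Z_t^\tau)$ is continuous and decreasing. The continuity alone lets me invoke Proposition \ref{prop::cont} to conclude that $\tau$ avoids stopping times. Once that is known, I am in the hypothesis of Proposition \ref{clar}: a random time avoiding stopping times whose Azéma supermartingale is continuous and decreasing. That proposition then yields that $\tau$ is a pseudo-stopping time, giving both conditions in the conclusion.

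For the backward implication, I suppose that $\tau$ is a pseudo-stopping time that avoids stopping times. This is exactly the ``if'' part of Proposition \ref{clar}, which directly delivers continuity and monotonicity of $Z^\tau$: indeed, being a pseudo-stopping time gives $Z_t^\tau = 1 - A_t^\tau$ by Theorem \ref{mainthm2}, and the avoidance property together with Lemma \ref{lem:evitement} ensures $A^\tau$ is continuous, hence $Z^\tau$ is continuous and decreasing.

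There is no real obstacle: the only novelty over Proposition \ref{clar} is that the avoidance hypothesis is no longer needed a priori, and that redundancy is removed precisely by Proposition \ref{prop::cont}. The plan is thus simply to chain these two propositions in the forward direction and to quote Proposition \ref{clar} directly in the backward direction.
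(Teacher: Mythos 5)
Your proof is correct and follows exactly the route the paper intends: the corollary is stated without proof as an immediate consequence of chaining Proposition \ref{prop::cont} (continuity implies avoidance) with both directions of Proposition \ref{clar}. The only loose end, which is present in the paper's own statement as well, is that Proposition \ref{prop::cont} is stated for \emph{finite} random times, so strictly speaking the forward direction needs that caveat.
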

Now we recall a theorem which is useful in constructing the
martingale hazard process. 
\begin{thm}[\cite{yorjeulin}]\label{calccomp}
Let $H$ be a bounded $\left(\mathcal{G}_{t}\right)$
predictable process. Then
$$H_{\tau}\mathbf{1}_{\tau\leq t}-\int_{0}^{t\wedge\tau}\dfrac{H_{s}}{Z_{s-}^{\tau}}da_{s}^{\tau}$$is
a $\left(\mathcal{G}_{t}\right)$ martingale.
\end{thm}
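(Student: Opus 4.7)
The strategy is to read this as the identification of the $(\mathcal{G}_t)$-predictable compensator of the integrable increasing process $D_t = \mathbf{1}_{\tau \leq t}$. Since $D$ has a single jump of unit size at $\tau$, one has $\int_0^t H_s\, dD_s = H_\tau \mathbf{1}_{\tau \leq t}$ for any $\mathcal{G}$-predictable $H$, so once we know that
\[
M_t \;=\; D_t \;-\; \int_0^{t \wedge \tau} \frac{da_s^\tau}{Z_{s-}^\tau}
\]
is a $(\mathcal{G}_t)$-martingale, the full claim for bounded $\mathcal{G}$-predictable $H$ follows by stochastic integration against $M$, which is a martingale of integrable variation.

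To identify the compensator, I would use the characterization by duality: setting $\Lambda_t = \int_0^{t \wedge \tau} da_s^\tau / Z_{s-}^\tau$, which is $(\mathcal{G}_t)$-predictable and increasing, it is enough to show that for every bounded $(\mathcal{G}_t)$-predictable $K$,
\[
\mathbf{E}[K_\tau \mathbf{1}_{\tau < \infty}] \;=\; \mathbf{E}\!\left[\int_0^\infty \frac{K_s\, \mathbf{1}_{s \leq \tau}}{Z_{s-}^\tau}\, da_s^\tau\right].
\]
Jeulin's structure theorem for the progressively enlarged filtration asserts that any bounded $(\mathcal{G}_t)$-predictable $K$ coincides on the stochastic interval $(0, \tau]$ with some bounded $(\mathcal{F}_t)$-predictable process, and both sides above depend only on the values of $K$ on $(0, \tau]$, so we may assume $K$ is $\mathcal{F}$-predictable. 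The left-hand side then equals $\mathbf{E}[\int K_s\, dA_s^\tau] = \mathbf{E}[\int K_s\, da_s^\tau]$ by the defining property of the dual predictable projection. For the right-hand side, since $a^\tau$ is $\mathcal{F}$-predictable, we may replace the integrand by its $\mathcal{F}$-predictable projection; using the classical identity that the $\mathcal{F}$-predictable projection of $\mathbf{1}_{[0, \tau]}$ equals $Z_{-}^\tau$, the factor $Z_{s-}^\tau$ cancels and we recover exactly $\mathbf{E}[\int K_s\, da_s^\tau]$, matching the left-hand side.

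The main technical obstacle is the set $\{Z_{-}^\tau = 0\}$, where the integrand defining $\Lambda$ is formally $1/0$. Proposition \ref{lestbiendefinie} resolves this: $\{Z_-^\tau = 0\}$ is disjoint from $[0, \tau)$, so under the convention $0/0 = 0$ the factor $\mathbf{1}_{s \leq \tau}$ kills the integrand wherever $Z_{s-}^\tau = 0$ matters on $[0, \tau)$, and the cancellation in the predictable-projection step relies only on $\mathbf{1}_{s \leq \tau}\, Z_{s-}^\tau / Z_{s-}^\tau = \mathbf{1}_{s \leq \tau}$ on $\{Z_{s-}^\tau > 0\}$. The extension from a generating family such as $K_s = X_u \mathbf{1}_{(u, v]}(s)$, with $X_u$ bounded and $\mathcal{F}_u$-measurable, to all bounded $\mathcal{F}$-predictable $K$ is by a routine monotone class argument, which closes the proof.
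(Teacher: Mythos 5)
The paper does not prove this statement: it is recalled from Jeulin--Yor \cite{yorjeulin} with only the citation, so there is no internal argument to compare against. Your proof is the classical one for this formula and it is correct in its structure: reduce to $H\equiv 1$, since integrating a bounded $\GG$-predictable process against a $\GG$-martingale of integrable variation again yields a martingale (and $\E\bigl[\int_0^{\tau}\dd a_s^{\tau}/Z_{s-}^{\tau}\bigr]=\E[a_\infty^\tau]\le 1$, so the variation is indeed integrable); then identify $\Lambda_t=\int_0^{t\wedge\tau}\dd a_s^\tau/Z^\tau_{s-}$ as the $\GG$-compensator of $\I_{\tau\le t}$ by duality, invoking Jeulin's lemma that bounded $\GG$-predictable processes agree with $\FF$-predictable ones on $]0,\tau]$, the defining property of the dual predictable projection $a^\tau$, and the identity $^p(\I_{[0,\tau]})=Z^\tau_{-}$.

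One spot deserves a sharper reference than Proposition \ref{lestbiendefinie}. That proposition only says $\{Z^\tau_-=0\}$ is disjoint from the \emph{open} interval $[0,\tau[$, while your cancellation $\I_{s\le\tau}Z^\tau_{s-}/Z^\tau_{s-}=\I_{s\le\tau}$ must hold $\dd a^\tau\otimes\dd\P$-a.e.\ on the \emph{closed} interval $[0,\tau]$, i.e.\ including the endpoint $s=\tau$. This is covered by the same projection identity you are already using:
\begin{equation*}
\E\left[\int_0^\infty \I_{\{Z^\tau_{s-}=0\}}\,\I_{s\le\tau}\,\dd a_s^\tau\right]
=\E\left[\int_0^\infty \I_{\{Z^\tau_{s-}=0\}}\,Z^\tau_{s-}\,\dd a_s^\tau\right]=0,
\end{equation*}
so the set $\{Z^\tau_-=0\}\cap[0,\tau]$ is negligible for the relevant measure; equivalently, one can quote the classical strengthening (Jeulin's Lemme~4.3) that $\{Z^\tau_-=0\}$ is in fact disjoint from $[0,\tau]$, so that $\P(Z^\tau_{\tau-}=0,\ \tau<\infty)=0$ and the two sides of your duality identity match without any leftover term on $\{Z^\tau_-=0\}$. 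With this point made explicit, the argument is complete.
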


\begin{cor}
Let $\tau$ be a pseudo-stopping time
that avoids $\FF$ stopping times. Then
the
$\GG$ dual predictable projection of
$\mathbf{1}_{\tau\leq t}$ is
$\log\left(\frac{1}{Z_{t\wedge\tau}^{\tau}}\right)$.

Let $g$ be an honest time (that means that $g$ is the end of an
$\FF$ optional set) that avoids $\FF$ stopping times. Then the
$\GG$ dual predictable projection of $\mathbf{1}_{g\leq t}$ is
$A_t^g$.
\end{cor}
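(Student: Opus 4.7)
The unified starting point is Theorem~\ref{calccomp} with $H\equiv 1$: in both statements, the $\GG$ dual predictable projection of $\mathbf{1}_{\tau\leq t}$ is the process $\displaystyle\int_0^{t\wedge\tau}\frac{1}{Z_{s-}^{\tau}}\,da_s^{\tau}$, so the work reduces to evaluating this integral under each set of hypotheses.

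For part~(1), I would use Theorem~\ref{mainthm2} to obtain $\mu^{\tau}\equiv 1$, hence $Z_t^{\tau}=1-A_t^{\tau}$. Combining hypothesis~$(\mathbf{A})$ with Lemma~\ref{lem:evitement} yields $A^{\tau}=a^{\tau}$ continuous, so $Z^{\tau}$ is continuous, $Z_{-}^{\tau}=Z^{\tau}$, and $da_s^{\tau}=-dZ_s^{\tau}$. Using $Z_0^{\tau}=1$ (which holds because $\tau$ avoids $0$), the integral becomes
$$\int_0^{t\wedge\tau}\frac{-dZ_s^{\tau}}{Z_s^{\tau}}\;=\;-\log Z_{t\wedge\tau}^{\tau}\;=\;\log\frac{1}{Z_{t\wedge\tau}^{\tau}},$$
which is the claimed expression.

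For part~(2), I would invoke two classical properties of an honest time $g$ avoiding $\FF$-stopping times, both from the Jeulin--Yor theory of enlargement of filtrations: (i) the measure $da^{g}$ is carried by $\{Z_{-}^{g}=1\}$, and (ii) $A^{g}$ is constant on $[g,\infty[$, i.e.\ $A^{g}_t=A^{g}_{t\wedge g}$. Property~(i) trivializes the integrand to $1$ on the support of the integrator, so
$$\int_0^{t\wedge g}\frac{da_s^{g}}{Z_{s-}^{g}}\;=\;a_{t\wedge g}^{g}\;=\;A_{t\wedge g}^{g},$$
where the second equality uses Lemma~\ref{lem:evitement}; property~(ii) then identifies this with $A_t^{g}$. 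The main obstacle is justifying (i) and (ii); both follow from the characterization of honest times as ends of optional sets, which gives $g=\sup\{s:Z_s^{g}=1\}$ and hence $Z_{-}^{g}<1$ on $(g,\infty[$, so $da^{g}$ has no mass there.
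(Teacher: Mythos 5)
Your proposal is correct and follows essentially the same route as the paper: apply Theorem~\ref{calccomp} with $H\equiv 1$, then in the pseudo-stopping-time case use $A^\tau=a^\tau=1-Z^\tau$ continuous to evaluate the integral as $-\log Z^{\tau}_{t\wedge\tau}$, and in the honest-time case use that $da^{g}$ is carried by $\{Z^{g}=1\}$ so the integrand is $1$ there. Your added justification that $A^{g}_{t\wedge g}=A^{g}_t$ (no mass of $da^{g}$ after $g$) is a detail the paper leaves implicit, but the argument is the same.
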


\begin{proof}
Let $\tau$ be a random time; taking $H\equiv1$, in Theorem \ref{calccomp} we find that
$\int_{0}^{t\wedge\tau}\frac{1}{Z_{s-}^{\tau}}dA_{s}^{\tau}$ is the
$\GG$ dual predictable projection of
$\mathbf{1}_{\tau\leq t}$.

When $\tau$ is a pseudo-stopping time
that avoids $\FF$ stopping times, we have
from Theorem \ref{mainthm2} that the
$\GG$ dual predictable projection of
$\mathbf{1}_{\tau\leq t}$ is
$-\log\left(Z_{t\wedge\tau}^{\tau}\right)$ since in this case $A_t^\tau=1-Z_t^\tau$ is continuous.

The second fact is an easy consequence of the well known fact that
the measure  $dA_t^g$ is carried by
$\left\{t:\;Z_{t}^{g}=1\right\}$ (see \cite{azema}).
\end{proof}

As a consequence, we have the following characterization of the
martingale hazard process: 
\begin{prop}\label{prop:marthazproc}Let $\tau$ be a random time. Suppose that $Z^\tau_t>0$, $\forall t$. Then, there exists a unique martingale hazard process $\Lambda_t$, given by:
$$\Lambda_t=\int_0^t \dfrac{da_u^\tau}{Z_{u-}},$$
where recall that $a_t^\tau$ is the dual predictable projection of $\I_{\tau\leq t}$.
\end{prop}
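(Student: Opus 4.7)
The plan is to handle existence and uniqueness separately. For existence I would exhibit the candidate
$$\Lambda_t = \int_0^t \frac{da_u^\tau}{Z_{u-}^\tau}$$
and verify the three requirements of Definition \ref{def::hazards}(ii). For uniqueness I would combine the uniqueness of the $\GG$-predictable compensator of $D$ with the positivity hypothesis $Z^\tau_t > 0$ to recover $\Lambda$ beyond its stopped version $\Lambda^\tau$.

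For the existence part, the first routine task is to check that the integral is well-defined: by Proposition \ref{lestbiendefinie}, the sets $\{Z^\tau = 0\}$ and $\{Z^\tau_- = 0\}$ have the same lower bound, so the hypothesis $Z^\tau > 0$ renders both empty, hence $Z^\tau_-$ is strictly positive. The integrand $1/Z^\tau_-$ is then a well-defined positive $\FF$-predictable process (as $Z^\tau_-$ is left-continuous and adapted), and integrating it against the $\FF$-predictable increasing process $a^\tau$ yields an $\FF$-predictable, right-continuous, increasing process $\Lambda$. The martingale property of $D - \Lambda^\tau$ in $\GG$ then follows immediately from Theorem \ref{calccomp} applied with $H \equiv 1$.

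For the uniqueness part, let $\Lambda$ and $\Lambda'$ be two $\FF$-predictable, right-continuous, increasing processes meeting the definition, and set $K = \Lambda - \Lambda'$. Since $\FF \subset \GG$ and $\tau$ is a $\GG$-stopping time, both $\Lambda^\tau$ and $(\Lambda')^\tau$ are $\GG$-predictable increasing processes whose difference with $D$ is a $\GG$-martingale, so by the uniqueness of the $\GG$-predictable compensator they are indistinguishable: $K_{t\wedge\tau} \equiv 0$. Fixing $t$, this pointwise identity restricted to $\{\tau > t\}$ reads $K_t \I_{\tau > t} = 0$ a.s.; taking the $\F_t$-conditional expectation and using the $\F_t$-measurability of $K_t$ yields $K_t Z^\tau_t = 0$ a.s. The assumption $Z^\tau_t > 0$ then forces $K_t = 0$ almost surely for every $t$, and right-continuity upgrades this to the indistinguishability of $\Lambda$ and $\Lambda'$.

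The step I expect to be the crux is the uniqueness argument, because \emph{a priori} an $\FF$-predictable process is not determined by its values on the stochastic interval $[0,\tau]$ when $\tau$ fails to be an $\FF$-stopping time, and Definition \ref{def::hazards}(ii) only constrains $\Lambda$ through its stopped version. The hypothesis $Z^\tau > 0$ is tailor-made to bridge exactly this gap, since it says the event $\{\tau > t\}$ carries positive $\F_t$-conditional mass at every $t$, which is precisely what makes the projection $K_t = \E[K_t \I_{\tau > t}\mid \F_t]/Z_t^\tau$ legitimate. All remaining steps are direct applications of Theorem \ref{calccomp} and Proposition \ref{lestbiendefinie}.
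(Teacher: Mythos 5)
Your proof is correct, and the existence half (Theorem \ref{calccomp} with $H\equiv 1$, plus Proposition \ref{lestbiendefinie} to make $1/Z^\tau_-$ well defined) is exactly how the paper gets existence, via the corollary preceding the proposition. Your uniqueness argument, however, takes a genuinely different route. The paper introduces the $\FF$-stopping time $T=\inf\{t:\Lambda^1_t\neq\Lambda^2_t\}$, deduces $T\geq\tau$ a.s.\ from the uniqueness of the compensator, and then invokes the structure of the set $\{Z^\tau=0\}$ (its d\'ebut is the smallest stopping time exceeding $\tau$, and a nonnegative supermartingale is absorbed at $0$) to conclude that $T<\infty$ would force $Z^\tau$ to vanish, contradicting the hypothesis. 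You instead fix $t$, observe that $K_{t\wedge\tau}\equiv 0$ gives $K_t\I_{\tau>t}=0$ a.s., and project onto $\F_t$ to get $K_tZ^\tau_t=0$, whence $K_t=0$. Your route is more elementary and self-contained: it uses only the $\F_t$-measurability of $K_t$ and the definition of $Z^\tau_t$, with no appeal to the d\'ebut theorem or to the absorption property of $Z^\tau$ at zero, and it isolates cleanly the role of the hypothesis $Z^\tau_t>0$ (the conditional mass of $\{\tau>t\}$ is positive, so agreement on $[0,\tau]$ propagates to all of $[0,\infty)$). The paper's route, in exchange, localizes the first time of disagreement and ties it to Proposition \ref{lestbiendefinie}. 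One cosmetic point: writing $K_t=\E[K_t\I_{\tau>t}\mid\F_t]/Z^\tau_t$ presupposes $K_t$ integrable, which an increasing process need not be; it is safer to condition the bounded variable $\I_{\{K_t\neq 0\}}$, i.e.\ note that $\E[\I_{\{K_t\neq 0\}}Z^\tau_t]=\P(K_t\neq 0,\ \tau>t)=0$ and conclude $\P(K_t\neq 0)=0$ from $Z^\tau_t>0$.
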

\begin{proof}
We suppose there exist two different martingale hazard processes $\Lambda^{1}$ and $\Lambda^{2}$ and denote
\[
T\left(  \omega\right)  =\inf\left\{  t:\Lambda_{t}^{1}\left(
\omega\right) \neq\Lambda_{t}^{2}\left(  \omega\right)  \right\}  .
\]
$T$ is an $\left(  \mathcal{F}_{t}\right)  $-stopping time hence a
$\GG$ stopping time. Due to the uniqueness of the predictable
compensator we must have for all $t\geq0:$
\[
\Lambda_{t\wedge\tau}^{1}=\Lambda_{t\wedge\tau}^{2}\ a.s.
\]
Hence, $T>\tau$ $a.s.$ and hence $Z_{t}^{\tau}=0$, $\forall t\geq T$. By assumption, this
is impossible, hence $\Lambda^{1}=\Lambda^{2}\ a.s.$
\end{proof}


It is conjectured in \cite{jenbrutk1} that if $\tau$ is any random
time (possibly a stopping time) such that $\P(\tau\leq t|\F_t)$ is
an increasing process, and if the martingale hazard process
$\Lambda$  is continuous, then $\Lambda=\Gamma$, where  $\Gamma$
is the hazard process. We now provide a counterexample to this
conjecture. Indeed, let $\tau$ be a totally inaccessible stopping
time of the filtration $\FF$. Then of course $\P(\tau\leq
t|\F_t)=\I_{\tau\leq t}$ is an increasing process. Let now $(A_t)$
be the predictable compensator of $\I_{\tau\leq t}$. It is well
known (see \cite{azema} or \cite{jeulin} for example) that $A_t$
is a continuous process (that satisfies $A_t=A_{t\wedge\tau}$) and
hence $\Lambda_t=A_t$ is continuous. But clearly
$\Gamma_t\neq\Lambda_t$.

We propose the following theorem instead of the above conjecture
(recall that the fact that Az\'ema's supermartingale  is
continuous and decreasing means that $\tau$ is a pseudo-stopping
time):
\begin{thm}
Let $\tau$ be a pseudo-stopping time. Assume further that $Z_t^\tau>0$ for all $t$.
\begin{enumerate}[(i)]
\item Under \textbf{(A)}, $\Gamma$ is continuous and $\Gamma_{t}=\Lambda_t=-\ln Z_{t}$.
\item Under \textbf{(C)}, if $\Lambda$ is continuous, then $\Gamma_{t}=\Lambda_t=-\ln Z_{t}$.
\end{enumerate}
\end{thm}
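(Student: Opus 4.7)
The strategy is to combine the pseudo-stopping time characterization (Theorem 2.5) with the explicit formula for the martingale hazard process (Proposition 3.5) and reduce everything to a change of variables on a continuous, finite-variation process.

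First, since $\tau$ is a pseudo-stopping time, Theorem \ref{mainthm2} gives $\mu_t^\tau \equiv 1$, hence via $\mu_t^\tau = A_t^\tau + Z_t^\tau$ we obtain
\[
Z_t^\tau = 1 - A_t^\tau.
\]
By Proposition \ref{prop:marthazproc}, the unique martingale hazard process is
\[
\Lambda_t = \int_0^t \frac{da_u^\tau}{Z_{u-}^\tau}.
\]
In either setting (i) or (ii), I would next show that the dual optional and predictable projections agree and that $Z^\tau$ is continuous of finite variation. In case (i), Lemma \ref{lem:evitement} under \textbf{(A)} tells us $A^\tau = a^\tau$ is continuous, so $Z^\tau = 1 - a^\tau$ is continuous and of bounded variation. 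In case (ii), Lemma \ref{lem:evitement} under \textbf{(C)} gives $A^\tau = a^\tau$ again, and the hypothesis that $\Lambda$ is continuous forces $a^\tau$ to be continuous (since $Z_{u-}^\tau > 0$ and $\Lambda$ jumps exactly when $a^\tau$ does), so once more $Z^\tau = 1 - a^\tau$ is continuous and of bounded variation.

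With $Z^\tau$ continuous and of finite variation in both cases, $da_u^\tau = -dZ_u^\tau$ and $Z_{u-}^\tau = Z_u^\tau$, so the elementary change of variables for continuous processes of finite variation yields
\[
\Lambda_t = -\int_0^t \frac{dZ_u^\tau}{Z_u^\tau} = -\bigl(\log Z_t^\tau - \log Z_0^\tau\bigr) = -\log Z_t^\tau = \Gamma_t,
\]
using $Z_0^\tau = 1$. In particular $\Gamma$ is itself continuous, proving both (i) and (ii).

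No real obstacle is expected: the only delicate point is the argument in (ii) that continuity of $\Lambda$ transfers to continuity of $a^\tau$, which rests on the strict positivity of $Z_-^\tau$. Once that is in place, both statements collapse to a single one-line integration.
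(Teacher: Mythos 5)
Your proof is correct and uses the same ingredients as the paper's (the pseudo-stopping-time identity $Z^\tau=1-A^\tau$ from Theorem \ref{mainthm2}, Lemma \ref{lem:evitement}, and the formula of Proposition \ref{prop:marthazproc}), followed by the same one-line Stieltjes integration $\int_0^t da_u^\tau/Z_{u-}^\tau=-\int_0^t dZ_u^\tau/Z_u^\tau=-\ln Z_t^\tau$. The only cosmetic difference is in (ii), where the paper deduces from the continuity of $a^\tau$ that $\tau$ avoids all predictable, hence under \textbf{(C)} all, stopping times and then invokes part (i), whereas you verify the continuity of $Z^\tau$ directly; both routes are valid.
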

\begin{proof}
(i) follows from lemma \ref{lem:evitement}, Theorem \ref{mainthm2} and proposition \ref{prop:marthazproc}.

(ii) Assume \textbf{(C)} holds.  Since $\Lambda$ is assumed to be continuous, it follows from proposition \ref{prop:marthazproc} (2) that $a_t^{\tau}$ is continuous. Hence $\tau$ avoids all predictable stopping times. But under \textbf{(C)}, all stopping times are predictable. Consequently $\tau$ avoids all stopping times and we apply part (i).
\end{proof}
It has been proved in \cite{jenbrutk1} that in general, even under the assumptions \textbf{(CA)}, the hazard process and the martingale hazard process may differ. The example they used was $g\equiv\sup\{t\leq1:\;W_t=0\}$, where $W$ denotes as usual the standard Brownian Motion. This time is a typical example of an honest time (i.e. the end of an optional set). We shall now show that this result actually holds for any honest time $g$ and compute explicitly the difference in this case. We shall need for this the following characterisation of honest times given in \cite{ashyordoob}:
\begin{thm}[\cite{ashyordoob}]
\label{multiplicatcarac} Let $g$\ be an honest time. Then, under the
conditions \textbf{(CA)}, there exists a unique continuous and nonnegative
local martingale $\left( N_{t}\right) _{t\geq 0}$,
with $N_{0}=1$ and $\lim_{t\rightarrow \infty }N_{t}=0$, such that:%
\begin{equation*}
Z_{t}^g=\mathbf{P}\left( g>t\mid \mathcal{F}_{t}\right)
=\dfrac{N_{t}}{\Sigma_{t}},
\end{equation*}where $\Sigma_{t}=\sup_{s\leq t}N_{s}$. The honest time $g$ is also given by:
\begin{eqnarray}
g &=&\sup \left\{ t\geq 0:\quad N_{t}=\Sigma_{\infty }\right\}  \notag \\
&=&\sup \left\{ t\geq 0:\quad \Sigma_{t}-N_{t}=0\right\} .
\label{defdeg}
\end{eqnarray}
\end{thm}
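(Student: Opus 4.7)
The plan is to build the local martingale $N$ explicitly from the Doob--Meyer decomposition of $Z^g$ and to identify its running supremum $\Sigma$ with an exponential of the dual predictable projection $A^g$. Under \textbf{(CA)}, Lemma \ref{lem:evitement} gives $Z^g$ continuous, so in $Z^g = m^g - A^g$ both $m^g$ and $A^g$ are continuous. The crucial classical input from Az\'ema's theory of honest times, which I would use repeatedly, is that the random measure $dA^g_t$ is carried by the set $\{t : Z_t^g = 1\}$.

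Define $N_t := Z_t^g \exp(A_t^g)$. Integration by parts yields
$$dN_t \;=\; e^{A_t^g}\, dm_t^g \;+\; (Z_t^g - 1)\, e^{A_t^g}\, dA_t^g,$$
and the finite-variation term vanishes since $dA^g$ is carried by $\{Z^g = 1\}$. Hence $N$ is a continuous nonnegative local martingale with $N_0 = 1$. To recover $\Sigma_t := \sup_{s \le t} N_s = e^{A_t^g}$: since $Z^g \le 1$ one has $N_t \le e^{A_t^g}$, so $\Sigma_t \le e^{A_t^g}$; conversely, setting $s_t := \sup\{s \le t : Z_s^g = 1\}$, the process $A^g$ is constant on $(s_t, t]$ and continuity of $Z^g$ at $s_t$ gives $N_{s_t} = e^{A_{s_t}^g} = e^{A_t^g}$. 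Thus $Z_t^g = N_t/\Sigma_t$. The limit $\lim_{t\to\infty} N_t = 0$ comes from $Z_\infty^g = 0$ (since $g$ is finite) together with the a.s.\ finiteness of $A_\infty^g$, whose expectation equals $1$.

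For the two descriptions of $g$ in \ref{defdeg}, I would use that $\{\Sigma_t - N_t = 0\} = \{Z_t^g = 1\}$ and that, after $g$, $A^g$ is constant (since $g$ is honest), so $\Sigma_t = \Sigma_\infty$ for $t \ge g$; combined with the classical identity $g = \sup\{t : Z_t^g = 1\}$ for an honest time, both displayed equalities follow. For uniqueness of $(N, \Sigma)$, I would reverse the It\^o calculation: for any such pair, $d\Sigma$ is supported on $\{N = \Sigma\} = \{Z^g = 1\}$, so expanding $dN = d(Z^g \Sigma)$ and equating the finite-variation part to zero forces $d\Sigma_t = \Sigma_t\, dA_t^g$, hence $\Sigma_t = \exp(A_t^g)$ and then $N_t = Z_t^g \Sigma_t$.

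The main obstacle is the honest-time fact $\operatorname{supp}(dA^g) \subset \{Z^g = 1\}$: it is classical but fundamental, and everything else in the argument cascades from applying this single identity in three different roles (killing the drift in the It\^o formula, locating where $\Sigma$ grows, and forcing uniqueness). A smaller subtlety is ensuring that the pointwise identification $\Sigma_{s_t} = e^{A_t^g}$ is legitimate — this is where the continuity of $Z^g$ provided by \textbf{(CA)} is essential, since a jump of $Z^g$ at $s_t$ would break the argument.
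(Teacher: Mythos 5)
Your proof is correct, but note that the paper itself offers no proof of this statement: it is imported verbatim from the cited reference \cite{ashyordoob}, so there is no internal argument to compare against. Your reconstruction via the multiplicative decomposition $N_t=Z_t^g e^{A_t^g}$, with everything driven by the Az\'ema support property $\mathrm{supp}(dA^g)\subset\{Z^g=1\}$, is essentially the argument of that reference, and the uniqueness step (forcing $d\Sigma_t=\Sigma_t\,dA_t^g$ on $\{N=\Sigma\}$) is sound. The only caveat is the one you half-acknowledge: the conclusion $\lim_{t\to\infty}N_t=0$ requires $g<\infty$ a.s.\ (equivalently $Z^g_\infty=0$), which is implicit in the statement but worth making explicit.
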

\begin{prop}\label{re}
Let $g$ be an honest time. Under \textbf{(CA)}, assume that
$\P(g>t|\F_t)>0$. Then there exists a unique strictly positive and
continuous local martingale $N$, with $N_0=1$ and
$\lim_{t\to\infty}N_t=0$, such that:
   $$\Gamma_t=\ln \Sigma_t-\ln N_t \text{ whilst } \Lambda_t=\ln\Sigma_t,$$where $\Sigma_t=\sup_{s\leq t}N_s$. Consequently, $$\Lambda_t-\Gamma_t=\ln N_t,$$ and $\Gamma\neq\Lambda$.
\end{prop}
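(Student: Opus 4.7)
The idea is to invoke the multiplicative representation of Theorem \ref{multiplicatcarac} to obtain $N$, compute $\Gamma$ directly, identify the dual predictable projection $a^g$ via It\^o's formula, and then evaluate $\Lambda$ through the formula of Proposition \ref{prop:marthazproc}.

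First, by Theorem \ref{multiplicatcarac}, since $g$ is honest and $(\textbf{CA})$ holds, there is a unique continuous nonnegative local martingale $N$ with $N_0=1$ and $\lim_{t\to\infty}N_t=0$ such that $Z_t^g=N_t/\Sigma_t$. Since $\Sigma_t\geq N_0=1$ for every $t$, the standing assumption $Z_t^g>0$ forces $N_t>0$, so $N$ is strictly positive. The identity $\Gamma_t=-\ln Z_t^g=\ln\Sigma_t-\ln N_t$ is then immediate.

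For $\Lambda$, note that under $(\textbf{CA})$ lemma \ref{lem:evitement} gives $Z^g$ continuous, so Proposition \ref{prop::cont} implies that $g$ avoids $\FF$-stopping times and $a^g=A^g$. Proposition \ref{prop:marthazproc} then yields the unique martingale hazard process
$$\Lambda_t=\int_0^t\frac{da_u^g}{Z_{u-}^g}.$$
To identify $a^g$, I would apply It\^o's formula to $N/\Sigma$; since $\Sigma$ is continuous, increasing and supported (as a measure) on $\{N=\Sigma\}$, one obtains
$$dZ_t^g=\frac{dN_t}{\Sigma_t}-\frac{N_t}{\Sigma_t^2}\,d\Sigma_t=\frac{dN_t}{\Sigma_t}-\frac{d\Sigma_t}{\Sigma_t},$$
the last equality because $N_t/\Sigma_t=1$ on the support of $d\Sigma$. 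By uniqueness of the Doob--Meyer decomposition (using $\ln\Sigma_0=0$), $a_t^g=\ln\Sigma_t$.

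Substituting back,
$$\Lambda_t=\int_0^t\frac{1}{Z_{u-}^g}\,d\ln\Sigma_u=\int_0^t\frac{\Sigma_u}{N_u}\,\frac{d\Sigma_u}{\Sigma_u},$$
and again using that $d\Sigma$ is carried by $\{N=\Sigma\}$ (on which $\Sigma_u/N_u=1$), this collapses to $\Lambda_t=\ln\Sigma_t$. Subtracting gives $\Lambda_t-\Gamma_t=\ln N_t$, which is not identically zero since $\ln N_0=0$ while $\lim_{t\to\infty}\ln N_t=-\infty$; thus $\Gamma\neq\Lambda$. The delicate point -- and the step I would expect to require the most care -- is the identification $a_t^g=\ln\Sigma_t$: although $d\Sigma$ lives on the (typically Lebesgue-null) contact set $\{N=\Sigma\}$, one must consistently use that on this set the ratio $N/\Sigma$ equals $1$, both to remove the $N_t/\Sigma_t^2$ factor from the It\^o expansion and to simplify the integrand in $\Lambda$.
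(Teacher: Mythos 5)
Your proposal is correct and follows essentially the same route as the paper's own proof: the multiplicative representation $Z^g=N/\Sigma$ from Theorem \ref{multiplicatcarac}, It\^o's formula together with the fact that $d\Sigma$ is carried by $\{N=\Sigma\}$ to identify $a^g_t=\ln\Sigma_t$ via Doob--Meyer uniqueness, and then Proposition \ref{prop:marthazproc} with the same simplification on the support of $d\Sigma$ to get $\Lambda_t=\ln\Sigma_t$. Your extra remark that $Z^g>0$ and $\Sigma\geq 1$ force $N$ to be strictly positive is a small but welcome addition that the paper leaves implicit.
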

\begin{proof}

From theorem \ref{multiplicatcarac}, there exists a unique
strictly positive continuous local martingale $N$, such that
$N_0=1$ and $\lim_{t\rightarrow \infty }N_{t}=0$, such that:
\begin{equation*}
Z_{t}^g=\mathbf{P}\left( g>t\mid \mathcal{F}_{t}\right)
=\dfrac{N_{t}}{\Sigma_{t}}.
\end{equation*} Now an application of It\^{o}'s formula yields:
$$\mathbf{P}\left( g>t\mid \mathcal{F}_{t}\right)=1+\int_{0}^{t}\dfrac{\dd N_s}{\Sigma_s}-\int_{0}^{t}\dfrac{N_s}{\Sigma_s^2}\dd \Sigma_s.$$But on the support of $(\dd \Sigma_s)$, we have $\Sigma_t=N_t$ and
hence:

$$\mathbf{P}\left( g>t\mid \mathcal{F}_{t}\right)=1+\int_{0}^{t}\dfrac{\dd N_s}{\Sigma_s}-\ln \Sigma_t.$$
From the uniqueness of the Doob-Meyer decomposition, we
deduce that the dual predictable projection of $\I_{g\leq t}$ is
$\ln \Sigma_t$. Now,applying proposition \ref{prop:marthazproc},
we have:
$$\Lambda_t=\int_{0}^{t}\dfrac{\dd (\ln \Sigma_s)}{\mathbf{P}\left( g>s\mid \mathcal{F}_{s}\right)}=\int_{0}^{t}\dfrac{\Sigma_s}{\Sigma_s N_s}\dd  \Sigma_s=\ln \Sigma_t,$$where we have again used the fact that the support of $(\dd \Sigma_s)$, we have $\Sigma_t=N_t$. The result of the proposition now follows easily.
\end{proof}

We shall now outline a nontrivial consequence of Theorem
\ref{multiplicatcarac} here. In \cite{azemjeulknightyor}, the
authors are interested in giving explicit examples of dual
predictable projections of processes of the form $\mathbf{1}_{L\leq
t}$, where $L$ is an honest time. Indeed, these dual projections are
natural examples of increasing injective processes (see
\cite{azemjeulknightyor} for more details and references). With
Theorem \ref{multiplicatcarac}, we have a complete characterization
of such projections, which are also very important in credit risk modeling:
\begin{cor}
Assume the assumption \textbf{(C)} holds, and let
$\left(C_{t}\right)$ be an increasing process. Then $C$ is the dual
predictable projection of $\mathbf{1}_{g\leq t}$, for some honest
time $g$ that avoids stopping times, if and only if there exists a
continuous local martingale $N_{t}$, with $N_0=1$ and $\lim_{t\to\infty}N_t=0$,
such that
$$C_{t}=\ln \Sigma_{t}.$$
\end{cor}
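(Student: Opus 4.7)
The plan is to prove both directions by reducing to Theorem~\ref{multiplicatcarac} and the It\^o/Doob--Meyer calculation already performed inside the proof of Proposition~\ref{re}. For the ``only if'' direction, assume $C = a^g$ for some honest $g$ avoiding $\FF$-stopping times; since (C) holds by hypothesis, this places us under (CA), so Theorem~\ref{multiplicatcarac} produces a continuous (nonnegative) local martingale $N$ with $N_0=1$ and $\lim_{t\to\infty}N_t=0$ satisfying $Z_t^g = N_t/\Sigma_t$. Repeating verbatim the It\^o-formula computation of Proposition~\ref{re}, together with the fact that $d\Sigma$ is carried by $\{N=\Sigma\}$, identifies the $\FF$-dual predictable projection $a_t^g$ as $\ln \Sigma_t$ by uniqueness of the Doob--Meyer decomposition, so $C_t=\ln \Sigma_t$ with this $N$.

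For the ``if'' direction, given $N$ as in the statement, the natural candidate is
$$g := \sup\{t \geq 0 : N_t = \Sigma_t\},$$
which is immediately honest as the end of the $\FF$-optional set $\{\Sigma - N = 0\}$. The central step is to verify $Z_t^g = N_t/\Sigma_t$ \emph{directly}, without invoking Theorem~\ref{multiplicatcarac} (which presupposes (CA), not yet established). I would write $\{g > t\}$ as $\{\sup_{s \geq t}N_s > \Sigma_t\}$ (the two events differ by a $\P$-null set, since given $\F_t$ the supremum has an atomless distribution) and apply Doob's maximal identity in its conditional form to the shifted continuous nonnegative local martingale $(N_{t+s})_{s\geq 0}$, which still converges to $0$, to obtain
$$\P\Bigl(\sup_{s \geq t}N_s > \Sigma_t \,\Big|\, \F_t\Bigr) = \frac{N_t}{\Sigma_t}.$$
Because $N$ and $\Sigma$ are continuous and $\Sigma_t \geq N_0 = 1 > 0$, $Z^g$ is then continuous, so Proposition~\ref{prop::cont} ensures that $g$ avoids $\FF$-stopping times, putting us under (CA). The It\^o calculation of Proposition~\ref{re}, applied now to the computed $Z^g$, finally yields $a_t^g = \ln \Sigma_t = C_t$, as required.

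The main technical obstacle is the conditional Doob maximal identity invoked in the converse: although classical, its application here requires the nonnegativity of $N$, the hypothesis $\lim_{t\to\infty}N_t=0$ (so that the shifted process still vanishes at infinity and the unconditional Doob identity can be transferred), and care with the $\F_t$-measurable level $\Sigma_t \geq N_t$ in place of a deterministic one. Once this identity is in place, the whole argument is a direct assembly of Theorem~\ref{multiplicatcarac}, Proposition~\ref{prop::cont} and the It\^o computation from the proof of Proposition~\ref{re}.
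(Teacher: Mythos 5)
Your proof is correct and follows essentially the same route as the paper, which simply invokes Theorem~\ref{multiplicatcarac} together with the identification $a_t^g=\ln\Sigma_t$ from the proof of Proposition~\ref{re}. Your explicit treatment of the converse --- constructing $g=\sup\{t:\ N_t=\Sigma_t\}$ and verifying $Z_t^g=N_t/\Sigma_t$ via the conditional Doob maximal identity, then using Proposition~\ref{prop::cont} to get avoidance of stopping times --- merely spells out the correspondence between honest times and such local martingales $N$ that underlies Theorem~\ref{multiplicatcarac} and that the paper's one-line proof leaves implicit.
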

\begin{proof}

This is a consequence of theorem \ref{multiplicatcarac} and the
fact, established in the proof of proposition \ref{re}, that the
dual predictable projection of $\I_{g\leq t}$ is $\ln \Sigma_t$.
\end{proof}

\renewcommand{\refname}{References}

\end{document}